\definecolor{red}{RGB}{255,0,0}
\definecolor{blue}{RGB}{0,0,255}
\definecolor{green}{RGB}{0,255,0}
\newcommand {\abs}[1]  {\left\vert#1\right\vert}
\newcommand {\set}[1]  {\left\{#1\right\}}
\newcommand {\defined} {\stackrel{def} {=}}
\newcommand {\bigoh}   {{\cal O}}
\newcommand {\runningtitle}[1] {\vspace{0.5ex}\noindent{\textbf{\boldmath #1:}}}
\newcommand{\ignore}[1] {}
\newcommand {\commentfig}[1] {#1}
\newcommand{\probl}[3]{
\begin{flushleft}
\fbox{
\begin{minipage}{\linewidth}
\noindent {\sc #1}\\
          {\bf Input:} #2\\
          {\bf Output:} #3
\end{minipage}}
\medskip
\end{flushleft}
}
\newcommand{\cl}{{\cal L}}
\newcommand{\ce}{{\cal E}}
\newcommand{\hbm}{H-matching}
\newcommand{\mhbm}{\textsc{Max H-matching}}
\DeclareMathOperator{\parent}{par}
\DeclareMathOperator{\children}{ch}
\DeclareMathOperator{\exposed}{exp}
\DeclareMathOperator{\repr}{repr}
\DeclareMathOperator{\aug}{aug}
\title{Hierarchical b-Matching}
\author{Yuval Emek}{Department of Industrial Engineering and Management, Technion, Haifa, Israel}{yemek@technion.ac.il}{https://orcid.org/0000-0002-3123-3451}{}
\author{Shay Kutten}{Department of Industrial Engineering  and Management, Technion, Haifa, Israel}{kutten@technion.ac.il}{}{}
\author{Mordechai Shalom}{TelHai Academic College, Upper Galilee, 12210, Israel}{cmshalom@telhai.ac.il}{https://orcid.org/0000-0002-2688-5703}{}
\author{Shmuel Zaks}{Department of Computer Science, Technion, Haifa,  Israel, and School of Engineering, Ruppin Academic Center, Israel.}{zaks@cs.technion.ac.il}{https://orcid.org/0000-0001-5637-4923}{}
\authorrunning{Y. Emek et al.}
\keywords{Combinatorial Optimization, Graph Theory, $b$-Matching}
\begin{document}

\maketitle

\begin{abstract}
A matching of a graph is a subset of edges no two of which share a common vertex, and a maximum matching is a matching of maximum cardinality. 
In a $b$-matching every vertex $v$ has an associated bound $b_v$, and a maximum $b$-matching is a maximum set of edges, such that every vertex $v$ appears in at most $b_v$ of them.  
We study an extension of this problem, termed {\em Hierarchical b-Matching}. In this extension, the vertices are arranged in a hierarchical manner. 
At the first level the vertices are partitioned into disjoint subsets, with a given bound for each subset. 
At the second level the set of these subsets is again  partitioned into disjoint subsets, with a given bound for each subset, and so on. 
In an {\em Hierarchical b-matching} we look for a maximum set of edges, that will obey all bounds (that is, no vertex $v$ participates in more than $b_v$ edges, then all the vertices in one subset do not participate in more that that subset's bound of edges, and so on hierarchically). 
We propose a polynomial-time algorithm for this new problem, that works for any number of levels of this hierarchical structure.
\end{abstract}
 
\section{Introduction}\label{sec:intro}
A matching of a graph is a subset of its edges such that no two edges share a common vertex. 
The maximum matching problem is the problem of finding a matching of maximum cardinality in a given graph.
The maximum weighted matching problem is an extension of the problem for edge-weighted graphs in which one aims to find a matching of maximum total weight.
Both problems are fundamental in graph theory and combinatorial optimization.
As such, they have been extensively studied in the literature.
While pioneering works focused on the non-weighted bipartite case, later work considered general graphs and weights.
The general case is solved in \cite{Edmonds65-PathsTreesFlowers} and a more efficient algorithm is proposed later in \cite{MV80}.

An important extension of these problems is the following maximum $b$-matching problem. 
We are given a (possibly weighted) graph and a positive integer $b_v$ for every vertex $v$ of the graph.
A $b$-matching of the graph is a multiset $M$ of its edges such that,
for every vertex $v$, 
the number of edges of $M$ incident to $v$ does not exceed $b_v$.
Clearly, a matching is a special case of $b$-matching in which $b_v=1$ for every vertex $v$.
The problems of finding a $b$-matching of maximum cardinality and of maximum weight are widely studied.
The weighted version of the problem is solved in \cite{Pul73}.
A faster algorithm is later proposed in \cite{Anstee87}.
This result is recently improved in \cite{Gabow2018-b-matching}.
Being a fundamental problem, the $b$-matching problem is considered in the literature in specific graph classes (e.g. \cite{Tamir95}, \cite{DuCoffePopa18-b-Matching}).
The $b$-matching model is used to solve numerous problems in different areas, e.g. \cite{Tennenholtz2002TractableCombinatorialAuctions}.
See \cite{LP09} for an excellent reference for matching theory, and \cite{lawler1976combinatorial} for flow techniques and other algorithmic techniques used to solve matching problems.

In this work, we consider an extension of the maximum cardinality $b$-matching problem and propose a polynomial-time algorithm for it.
In this extension, the vertices are arranged in a hierarchical manner, 
such that every set is either a single vertex or the union of other sets,
and with each set, there is an associated upper bound on the sum of the degrees of the vertices in it.

This problem can arise in hierarchical structures, for instance as in the following scenario. 
Pairs of researchers are willing to pay mutual visits to each other.
However, every researcher $r$ has a budget that allows her to exercise at most $b_r$ visits.
The goal is to find a maximum number of such pairs (that will visit each other) without exceeding the individual budget of any researcher.
This problem can be modeled as a $b$-matching problem.
Now we extend this scenario to the hierarchical case.
Some institutions assign budgets not only to individual researchers but also to research groups, departments, faculties and so on.
In this case a set of pairs is feasible if the number of visits to be done by every individual researcher, every research group, every department and every faculty is within its assigned budget.

In Section \ref{sec:prelim} we introduce basic definitions, notations and the problem's statement. In Section \ref{sec:pseudoPolynomial} we present a pseudo polynomial algorithm for the problem, 
which is improved in Section \ref{sec:fullyPolynomial} to a polynomial-time algorithm. We conclude with remarks and further research in Section \ref{sec:summary}.
 
\section{Preliminaries}\label{sec:prelim}
\runningtitle{Sets}
For two non-negative integers $n_1,n_2$, denote by $[n_1,n_2]$ 
the set of integers that are not smaller than $n_1$ and not larger than $n_{2}$.
$[n]$ is a shorthand for $[1,n]$,
and $\uplus$ denotes the union operator of multisets, 
i.e., $A \uplus B$ is a multiset in which the multiplicity of an element is the sum of its multiplicities in $A$ and $B$.
A set system $\cl$ over a set $U$ of elements is \emph{laminar} if for every two sets $L, L' \in \cl$ one of the following holds: 
$L \cap L' = \emptyset$, $L \subseteq L'$, or $L' \subseteq L$.
We consider laminar set systems of distinct sets and $\emptyset \notin \cl$.
In this case, at most one of the conditions may hold.
Since adding $U$ and all the singletons of $U$ to $\cl$ preserves laminarity, 
we assume without loss of generality that $U \in \cl$ and $\set{u} \in \cl$ for every $u \in U$.

The next Lemma summarizes well known fact about laminar set systems:
\begin{lemma}
\begin{enumerate}[i)]
Let $\cl$ be a laminar set system over a set $U$. Then

\item the elements of $\cl$ can be represented as a rooted tree $T_{\cl}$, 
in which the root corresponds to $U$ and the leaves to the singletons $\set{u}$ for every $u \in U$.

\item \label{itm:AtLeastTwoChildren} Every non-leaf node of $T_{\cl}$ has at least two children. 

\item \label{itm:NumberOfLaminarSets} The number of sets in $\cl$ is at most $2|U|-1$.

\end{enumerate}

\end{lemma}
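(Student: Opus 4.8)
The plan is to build the tree $T_{\cl}$ explicitly from the laminar structure, and then read off parts \itemref{AtLeastTwoChildren} and \itemref{NumberOfLaminarSets} by a counting argument on that tree. For the first part, I would define, for every $L \in \cl$ with $L \neq U$, its parent $\parent(L)$ to be the smallest set of $\cl$ that properly contains $L$. The key observation making this well defined is laminarity: the sets of $\cl$ properly containing a fixed $L$ are totally ordered by inclusion (any two of them intersect in $L \neq \emptyset$, so by laminarity one is contained in the other), and this collection is nonempty because $U \in \cl$; hence it has a unique minimum. Joining each $L \neq U$ to $\parent(L)$ yields a graph in which every node except $U$ has exactly one parent, and iterating the parent map strictly increases cardinality, so it terminates at $U$; this makes $T_{\cl}$ a tree rooted at $U$. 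I would then identify the leaves: a singleton $\set{u}$ has no proper nonempty subset in $\cl$, so it has no children; conversely, if $|L| \geq 2$ then for every $u \in L$ the singleton $\set{u}$ lies strictly inside $L$, so $L$ has at least one child. Hence the leaves are exactly the singletons.

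For part \itemref{AtLeastTwoChildren}, I would first check that the children of a non-leaf $L$ are precisely the inclusion-maximal members of $\cl$ strictly contained in $L$, and that these cover $L$: each $u \in L$ lies in some such maximal proper subset, since $\set{u}$ itself is a candidate and we may enlarge it to a maximal one strictly inside $L$. If $L$ had only one child $L'$, then every $u \in L$ would lie in $L'$, forcing $L \subseteq L' \subsetneq L$, a contradiction; so $L$ has at least two children.

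For part \itemref{NumberOfLaminarSets}, this is now routine counting on $T_{\cl}$. The tree has exactly $|U|$ leaves by the first part. Letting $i$ denote the number of internal nodes, part \itemref{AtLeastTwoChildren} gives that the number of tree edges, which equals the sum of the children counts over internal nodes, is at least $2i$; since a tree on $n$ nodes has exactly $n-1$ edges and here $n = |U| + i$, we obtain $|U| + i - 1 \geq 2i$, i.e. $i \leq |U| - 1$. Therefore $|\cl| = n = |U| + i \leq 2|U| - 1$.

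The only delicate point is the well-definedness and tree property in the first part, namely verifying that the smallest superset exists and is unique and that following parents reaches $U$; everything afterward is bookkeeping. I do not expect a genuine obstacle, since laminarity hands us exactly the nestedness needed at each step.
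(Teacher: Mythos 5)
Your proof is correct and follows the standard route the paper itself alludes to: the paper treats the lemma as well known, offering only the remarks that \itemref{AtLeastTwoChildren} follows from the distinctness of the sets in $\cl$ and that \itemref{NumberOfLaminarSets} follows from \itemref{AtLeastTwoChildren} by counting, which is exactly what your parent-map construction and leaf/internal-node count make explicit. No discrepancy with the paper's (implicit) argument.
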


Note that in this lemma: \textit{\ref{itm:AtLeastTwoChildren})} follows from the fact that we assumed that a laminar set system consists of distinct sets, 
and \textit{\ref{itm:NumberOfLaminarSets})} follows from \textit{\ref{itm:AtLeastTwoChildren})}, with equality (i.e. the number of elements is exactly $2|U|-1$) if and only if $T_{\cl}$ is a full binary tree. 
See Figure \ref{fig:Instance} for an example.

Identifying the sets of $\cl$ with the nodes of $T_{\cl}$ and we say that
a set $L \in \cl$ is the \emph{parent} of $L' \in \cl$,
or that $L'$ is a \emph{child} of $L$.
Denote by $\children(L)$  the set of all children of $L$.

\runningtitle{Graphs} 
We use standard terminology and notation for graphs, see for instance \cite{D12}.
Given a simple undirected graph $G$, denote by $V(G)$ the set of vertices of $G$ and by $E(G)$ the set of the edges of $G$.
Denoting an edge between two vertices $u$ and $v$ as $uv$,
we say that the edge $uv \in E(G)$ is {\em incident} to $u$ and $v$, 
$u$ and $v$ are the endpoints of $uv$, 
and $u$ is adjacent to $v$ (and vice versa).
For directed graphs, the arc $uv$ is said to be from $u$ to $v$.
Denote by $\delta_G(v)$ the set of all edges incident to the vertex $v$ of $G$, i.e., $\delta_G(v) = \set{uv \in E(G)}$.
The number of these edges is the \emph{degree} $d_G(v)$ of $v$ in $G$.
When there is no ambiguity, the subscript $G$ is omitted and written simply as $\delta(v)$ and $d(v)$.
A \emph{walk} of a graph (resp. directed graph) $G$ is a sequence $u_0, e_1, u_1, e_2, \ldots, e_\ell, u_\ell$ where 
every $u_i$ is a vertex of $G$, every $e_i$ is an edge (resp. arc) of $G$, and $e_i = u_{i-1} u_i$ for every $i \in [\ell]$. 
A \emph{trail} is a walk whose edges are distinct, i.e. $e_i \neq e_j$ whenever $i \neq j$, $i,j \in [\ell]$.
A walk (resp. trail) is \emph{closed} if $u_0=u_\ell$ and \emph{open} otherwise.

\runningtitle{Hierarchical $b$-Matching}
\newcommand{\inst}{(G,\cl,b,c)}
Let $G$ be a graph and $L \subseteq V(G)$ a set of vertices of $G$ and
$M \subseteq E(G)$ a multiset over the edges of $G$
where $x_{e,M} \geq 0$ denotes the multiplicity of $e$ in $M$.
The degree of $L$ in $M$ is the sum of the degrees (in $M$) of its vertices, i.e. $d_M(L) \defined \sum_{v \in L} d_M(v)$. 
($d_M(v)$ is the degree of $v$ in the graph induced by the edges of $M$.)
Clearly, for a singleton $\set{v}$, we have $d_M(\set{v}) = d_M(v)$.
Let $\cl$ be a laminar set system over $V(G)$,
$c$ a vector of positive integers indexed by the edges of $G$,
and $b$ a vector of positive integers indexed by the sets in $\cl$.
A multiset $M \subseteq E(G)$ is a Hierarchical $b$-matching (or {\hbm} for short) of $\inst$ if
the multiplicity $x_{e,M}$ of every edge $e$ of $G$ is at most $c_e$,
and $d_M(L) \leq b_L$ for every $L \in \cl$.
In this work, we consider the following problem
\probl
{\mhbm}
{A quadruple $\inst$ where\\
$G$ is a graph,\\
$\cl$ is a laminar set system over $V(G)$,\\
$b$ is a vector of positive integers indexed by (the sets of) $\cl$, and\\
$c$ is a vector of positive integers indexed by the edges of $G$.}
{An {\hbm} $M$ of $\inst$ of maximum cardinality.}

Without loss of generality the vertex set of $G$ is $[n]$, 
$\cl=\set{L_1, \ldots, L_m}$ with $m > n$, $L_i=\set{i}$ for every $i \in [n]$, $L_m=[n]$ is the root of $\cl$, $b_{L_k} = b_k$ for every $k \in [m]$, and $c_e = c_{i,j}$ for every edge $e=ij$ ($i,j \in [n]$) of $G$.

Assume also without loss of generality that 
\begin{itemize}
\item $\max \set{b_{L'} | L' \in \children(L)} \leq b_L \leq sum_{L' \in \children(L)} b_{L'}$
for every non-leaf $L \in \cl$, and 
\item $c_{uv} \leq \min \set{b_u, b_v}$ for every edge $uv$ of $G$. 
\end{itemize}
In fact, if this is not the case, the vectors $b$ and $c$ can be modified as follows without affecting the set of feasible solutions.
First process the sets $L \in \cl$ in a preorder manner and set $b_L = b_{\parent(L)}$ whenever $b_L > b_{\parent(L)}$.
Then process the sets $L \in \cl$ in a postorder manner and set $b_L = sum_{L' \in \children(L)}$ whenever $b_L > sum_{L' \in \children(L)}$.

Given an {\hbm} $M$ of $\inst$, 
define the \emph{slackness} $s_{e,M}$ of an edge $e$ as $c_e - x_{e,M}$, 
and the \emph{slackness} $s_{L,M}$ of a set $L \in \cl$ as $b_L - d_M(L)$. 
Whenever no ambiguity arises, the name of the matching in the indices is omitted, and $x_e, s_e, s_L, d(v)$ is written instead of $x_{e,M}, s_{e,M}, s_{L,M}$, and $d_M(v)$, respectively.
A set $L \in \cl$ or an edge $e$ of $G$) is \emph{tight} in $M$ if its slackness in $M$ is zero.
A vertex $v$ is \emph{tight} in $M$ if there is at least one tight set $L \in \cl$ that contains $v$.

A \emph{matching} of $G$ is an {\hbm} of $\inst$ where $\cl$ consists of $V(G)$ and its singletons, 
$c_e=1$ for every edge $e$ of $G$, $b_{\{v\}}=1$ for every vertex $v$ of $G$ and $b_{V(G)}=\abs{V(G)}$.
A matching $M$ \emph{matches} (or \emph{saturates}) a vertex $v$ if $d_M(v)=1$ 
and it \emph{exposes} $v$ otherwise (i.e., $d_M(v)=0$). 
$M$ \emph{saturates} (resp. \emph{exposes}) a set of vertices $W \subseteq V(G)$, if $M$ saturates (resp. exposes) all the vertices of $W$. 
Denote by $V(M)$ (resp. $\exposed(M)$) the set of vertices matched (resp. exposed) by $M$. 

\section{Hierachical b-Matching}\label{sec:hierarchical}
Given a matching $M$ of a graph $G$, an $M$-\emph{augmenting path} of $G$ is a path of odd length that 
starts with an edge that is not in $M$ and its edges alternate between edges and non-edges of $M$.
It is well known that a matching $M$ is of maximum cardinality in a graph $G$ if and only if $G$ does not contain an $M$-augmenting path (Berge's Lemma \cite{Berge57}).
Since finding an augmenting path can be done in linear time,
this implies a polynomial-time algorithm that starts with any matching (e.g., the empty one) and improves it using augmenting paths until no such path is found.

Our design of the  polynomial-time algorithm for {\mhbm} consists of three parts.
We start by proving an analogous Lemma for H-matchings. This implies a pseudo-polynomial algorithm to augment a given {\hbm}.
Applying this algorithm starting from an empty {\hbm}, until an augmentation is impossible, implies a pseudo-polynomial algorithm for {\mhbm}.
This is done in Section \ref{sec:pseudoPolynomial}.
We then improve the result to get a polynomial-time algorithm for a single augmentation step,
and extend the technique introduced in \cite{Anstee87} to improve the overall algorithm to run in polynomial time. This is done in Section \ref{sec:fullyPolynomial}.

\subsection{A pseudo-polynomial algorithm}\label{sec:pseudoPolynomial}
We now present a pseudo-polynomial algorithm for the {\mhbm} problem. 
Our solution reduces the problem to the problem of finding a maximum cardinality matching of a graph using a pseudo-polynomial reduction.

Consult Figure \ref{fig:Instance} and Figure \ref{fig:Repr} for the following definition.
\begin{definition}
The \emph{representing graph} of an instance $\inst$ of {\mhbm} is the graph 
$\repr \inst=(\cl_T \cup \cl_B \cup \ce, E_{IN} \cup E_{UP} \cup E_E)$ where \begin{itemize}
    \item $\cl_B = \cup_{k = 1}^m L_{k,B}$, $\cl_T = \cup_{k = 1}^{m-1} L_{k,T}$, and every set $L_{k,B}$ and $L_{k,T}$ consists of $b_k$ vertices,
    \item $\ce = \cup_{ij \in E(G)} E_{i,j}$, and every set $E_{i,j}$ consists of $c_{i,j}=c_{j,i}$ vertices,
    \item $E_{IN}$ contains $b_k$ edges for every $k \in [m-1]$ connecting the $b_k$ vertices of $L_{k,B}$ with the $b_k$ vertices of $L_{k,T}$ so that $L_{k,T} \cup L_{k,B}$ induces a perfect matching,
    \item $E_{UP}$ contains $b_k \cdot b_{k'}$ edges between the $b_k$ vertices of $L_{k,T}$ and the $b_{k'}$ vertices of $L_{k',B}$ so that $L_{k,T} \cup L_{k',B}$ induces a complete bipartite graph, whenever $L_{k'}$ is the parent of $L_k$ in $\cl$.
    Moreover, $E_{UP}$ contains $c_{i,j} \cdot b_i$ edges between the $c_{i,j}$ vertices of $E_{i,j}$ and the $b_i$ vertices of $L_{i,B}$ for every edge $ij$ of $G$.
    \item {$E_E$} contains $c_{i,j}$ edges between the $c_{i,j}$ vertices of $E_{i,j}$ and the $c_{j,i} (=c_{i,j}$) vertices of $E_{j,i}$ 
    so that $E_{i,j} \cup E_{j,i}$ induces a perfect matching.
\end{itemize}
Let $M$ be an {\hbm} of $\inst$. The \emph{representing matching} $\repr(M)$ of $M$ is the matching $M'$ of $\repr \inst$ constructed as follows.
\begin{itemize}
    \item Start with the empty matching $M'$.
    \item Processing the edges $e$ of $G$ in some predefined order, for every
    edge $e=ij$ of $G$, add to $M'$ 
    \begin{itemize}
        \item the last $s_e$ edges of $E_E \cap (E_{i,j} \times E_{j,i})$, and
        \item the $x_e$ edges connecting the first $x_e$ vertices of $E_{i,j}$ to the first $x_e$ vertices of $L_{i,B}$ that are yet unmatched in $M'$.
    \end{itemize}
    Note that at this point (a) all the vertices of $\ce$ are matched by $M'$, and (b) the number of vertices of $L_{i,B}$ matched by $M'$ is $d_M(i)$ for every $i \in [n]$.
    \item Process the sets $L_i \in \cl \setminus \set{L_m}$ in some predefined postorder traversal of $\cl$. 
    \begin{itemize}
    \item Assume that the number of vertices of $L_{i,B}$ matched by $M'$ is $d_M(i)$ at the time it is processed.
    As was already mentioned, this assumption holds when a leaf (i.e., a singleton) of $\cl$ is processed.
    
    \item Add to $M'$ the perfect matching induced by the last $s_{L_i}$ vertices of $L_{i,B}$ and the last $s_{L_i}$ vertices of $L_{i,T}$.
    At this point, all the vertices of $L_{i,B}$ are matched by $M'$ and the number of unmatched vertices of $L_{i,T}$ is $d_M(i)$.
    
    \item Add to $M'$ the $d_M(i)$ edges that connect the first $d_M(i)$ vertices of $L_{i,T}$ with the first $d_M(i)$ vertices of $L_{j,B}$ 
    that are yet unmatched by $M'$ where $L_j$ is the parent of $L_i$.
    At this point all the vertices of $L_{i,B} \cup L_{i,T}$ are matched by $M'$. 
    
    \item We note that after all the children of a set $L_j$ have been processed, 
    the number of vertices of $L_{j,B}$ matched by $M'$ is $\sum_{j' | L_{j'} \in \children(L_j)} d_M(L_{j'})=d_M(L_j)$,
    i.e. our assumption holds when $L_j$ is processed.
    \end{itemize}
\end{itemize}
\end{definition}

\begin{figure}
\begin{center}
\scalebox{1}{\commentfig{
\includegraphics[width=\textwidth]{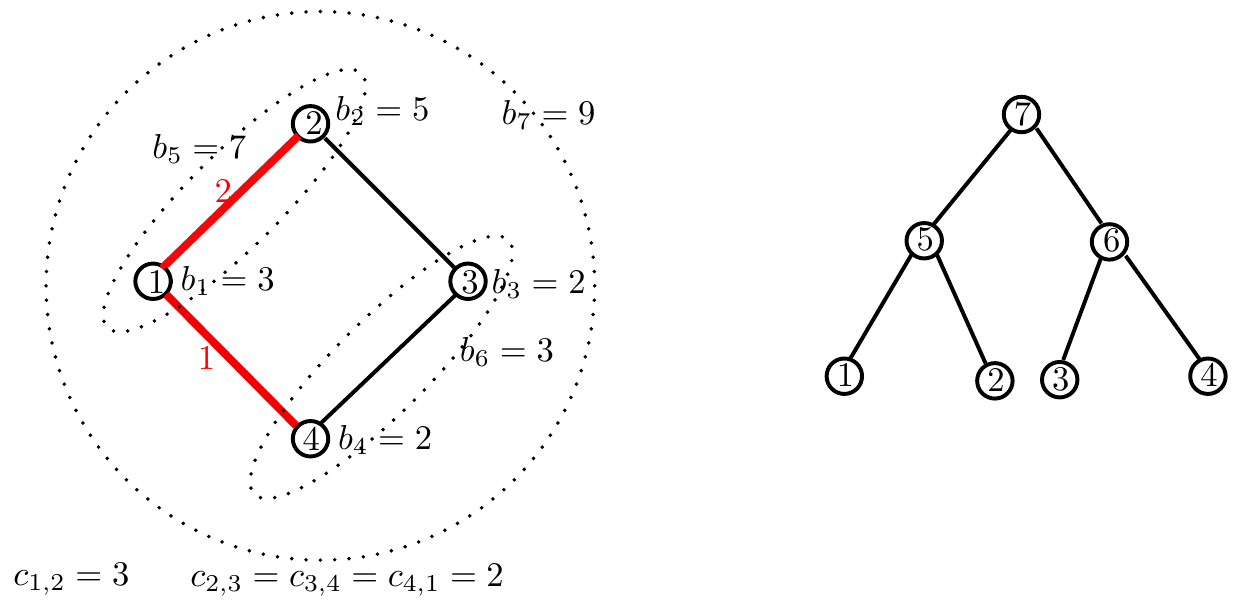}}}
\caption{An instance of {\mhbm} and a solution $M$: 
Depicted on the left hand side is the graph $G$ that is a cycle on four vertices 1, 2, 3, and 4. 
The set system $\cl$ consists of the four singletons $L_i=\set{i}$ for $i \in [4]$, and the sets $L_5=\set{1,2}$, $L_6=\set{3,4}$ and $L_7=V(G)$. 
The edges of $M$ are shown as thick red lines. 
The numbers on these lines are the multiplicities of the corresponding edges in $M$. 
Depicted on the right hand side is the Hasse diagram of the inclusion relation on $\cl$.}\label{fig:Instance}
\end{center}
\end{figure}

\begin{figure}
\begin{center}
\scalebox{1}{\commentfig{
\includegraphics[width=\textwidth]{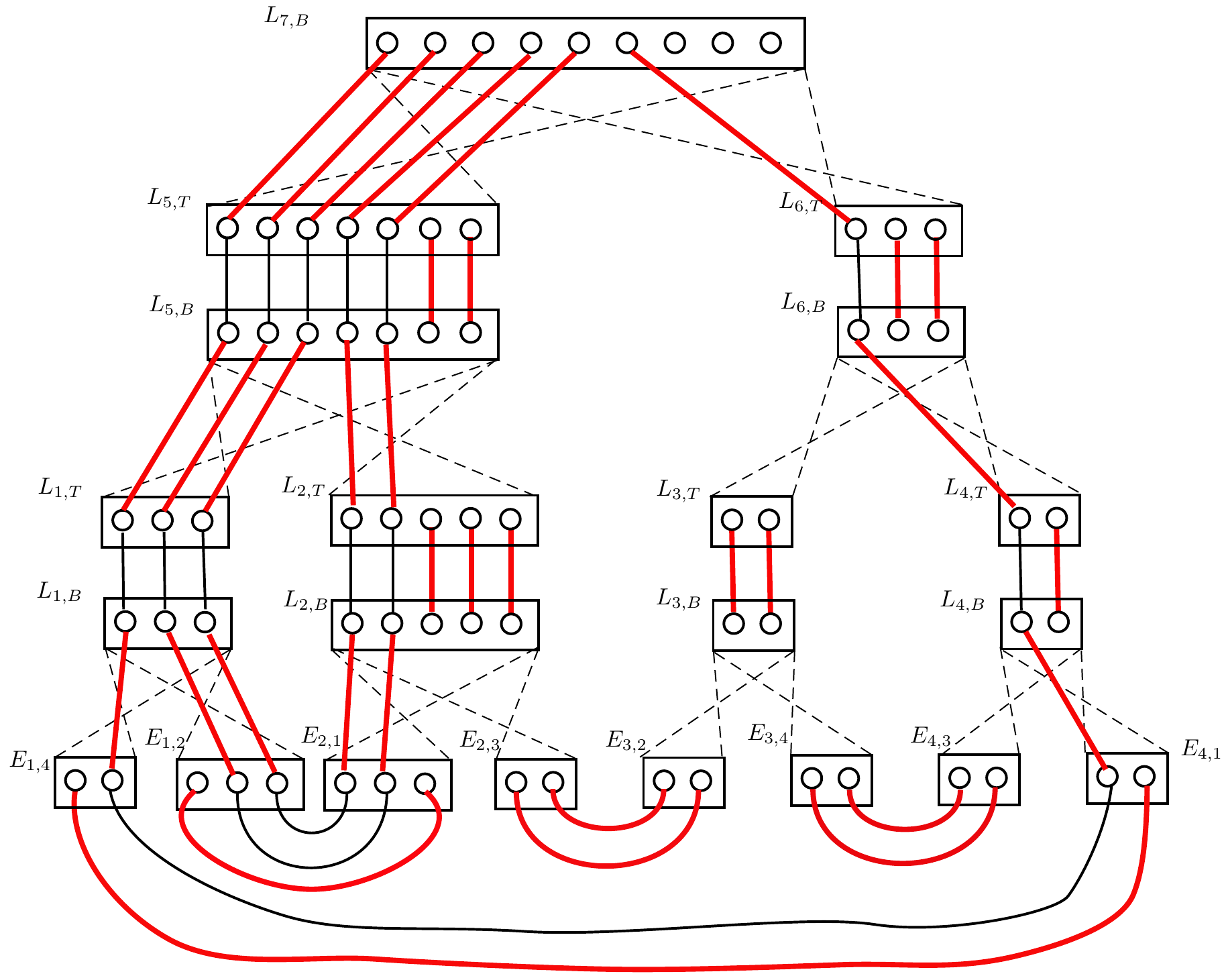}}}
\caption{The representing graph of the instance of {\mhbm} depicted in Figure \ref{fig:Instance}, and the representing matching $M'$ of its solution depicted in the same figure.
The edges of $M'$ are shown as thick red lines.
Two crossing dashed lines between two sets (e.g. the lines between $L_{1,T}$ and $L_{5,B}$) indicate that these sets are complete to each other.}\label{fig:Repr}
\end{center}
\end{figure}

The following Lemma is implied by the above description. 
\begin{lemma}\label{lem:BMatchingToMatching}
Let $M$ be an {\hbm} of $\inst$, and $M'=\repr(M)$. Then
\begin{enumerate}[i)]
    \item $\abs{V(M') \cap L_{m,B}} = d_M(L_m)=2
    \abs{M}$, and
    \item the number of edges of $M'$ between $L_{i,B}$ and $L_{i,T}$ is equal to $s_{L_i,M}$ for every $i \in [m-1]$.
\end{enumerate}
Moreover, $\exposed(M') \subseteq L_{m,B}$.
\end{lemma}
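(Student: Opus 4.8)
The plan is to verify the three claims directly from the construction of $\repr(M)$, tracking the matched/unmatched status of each relevant vertex set as the construction proceeds. The whole proof is essentially a bookkeeping argument that makes precise the ``Note that\ldots'' and ``At this point\ldots'' remarks already embedded in the definition.

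First I would establish the invariant that drives everything: \emph{after the set $L_i$ is processed, the number of vertices of $L_{j,B}$ (where $L_j = \parent(L_i)$) that are matched by $M'$ increases by exactly $d_M(i)$.} This follows because the construction adds, for $L_i$, exactly $d_M(i)$ edges from $L_{i,T}$ up into $L_{j,B}$, after first filling the slack $s_{L_i,M}$ with internal $E_{IN}$ edges between $L_{i,B}$ and $L_{i,T}$. I would prove by postorder induction that the stated assumption ``the number of matched vertices of $L_{i,B}$ is $d_M(i)$ when $L_i$ is processed'' holds throughout: the base case is the singletons, where step (b) of the edge-processing phase guarantees exactly $d_M(i)$ matched vertices of $L_{i,B}$; the inductive step is precisely the summation $\sum_{j' \mid L_{j'} \in \children(L_j)} d_M(L_{j'}) = d_M(L_j)$ noted in the definition, using $d_M(L_j) = \sum_{v \in L_j} d_M(v)$.

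With this invariant in hand, item ii) is immediate: the only edges of $M'$ placed between $L_{i,B}$ and $L_{i,T}$ are the $s_{L_i,M}$ edges added in the slack-filling step, so their number is exactly $s_{L_i,M}$ for every $i \in [m-1]$. For item i), I would apply the invariant at the root: all $m-1$ non-root sets are processed, and each child $L_{j'}$ of $L_m$ contributes $d_M(L_{j'})$ matched vertices to $L_{m,B}$, so $\abs{V(M') \cap L_{m,B}} = \sum_{j' \mid L_{j'} \in \children(L_m)} d_M(L_{j'}) = d_M(L_m)$. The equality $d_M(L_m) = 2\abs{M}$ is just the handshake identity: $L_m = V(G)$, so $d_M(L_m) = \sum_{v \in V(G)} d_M(v) = 2\abs{M}$, since every edge of the multiset $M$ contributes $2$ to the total degree (counting multiplicities $x_{e,M}$).

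Finally, for the ``Moreover'' claim $\exposed(M') \subseteq L_{m,B}$, I would argue that every vertex outside $L_{m,B}$ gets matched. The vertices of $\ce$ are all matched by remark (a) in the edge-processing phase; every vertex of $L_{i,B} \cup L_{i,T}$ for $i \in [m-1]$ is matched by the end of processing $L_i$ (the slack-filling step saturates the remaining $L_{i,B}$ vertices, and the up-edges saturate the remaining $L_{i,T}$ vertices), as stated in the definition. Hence the only possibly exposed vertices lie in $L_{m,B}$. I do not expect a serious obstacle here; the main care needed is simply to confirm that the counts never go negative, which is guaranteed by the normalizing assumptions on $b$ and $c$ (namely $c_{uv} \le \min\set{b_u, b_v}$ and $\max_{L' \in \children(L)} b_{L'} \le b_L \le \sum_{L' \in \children(L)} b_{L'}$) established in the preliminaries, ensuring each ``first $x_e$ unmatched vertices'' or ``last $s_{L_i}$ vertices'' selection is well defined.
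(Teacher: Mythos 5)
Your proposal is correct and follows essentially the same route as the paper, which offers no separate proof but asserts that the lemma ``is implied by the above description'' of $\repr(M)$ --- precisely the postorder bookkeeping invariant you make explicit. The only nit is a notational slip in your stated invariant ($d_M(i)$ should read $d_M(L_i)$ for non-leaf sets), which you implicitly correct later in the summation over children.
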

Informally, Lemma \ref{lem:BMatchingToMatching}  states that $(i)$ the number of matched vertices in the set $L_{m,B}$ is $2|M|$, and $(ii)$ the number of unmatched vertices of $M$ in $L_i$ is equal to the number of pairs matched by $M'$ between the sets $L_{i,B}$ and $L_{i,T}$ for every $i$, and in addition the unmatched vertices of $M'$ are all in the set $L_{m,B}$. 
In our example, the unmatched vertices are all in $L_{7,B}$, and $(i)$ the number of matched vertices in $L_{7,B}$ is $2|M|=6$. As for $(ii)$, 
for example, the slack of $L_1=\{1\}$ is $0$, and correspondingly there are no matched pairs between $L_{1,B}$ and $L_{1,T}$;  
the slack of  $L_2=\{2\}$ is $3$, and correspondingly there are $3$ pairs of $M'$ between $L_{1,B}$ and $L_{1,T}$; and the slack of  $L_6=\{3,4\}$ is $2$, and correspondingly there are $2$ pairs of $M'$ between $L_{6,B}$ and $L_{6,T}$. 

We now prove the opposite direction. 
Informally it states that starting from a given matching $M'$ of {\mhbm} 
in which the unmatched vertices all belong to  the set $L_{m,B}$, we can construct, in polynomial time, an {\hbm} $M$ of $\inst$, such that $M'=\repr(M)$, and $M'$ satisfies  properties $(i)$ and $(ii)$ of that Lemma.

\begin{lemma}\label{lem:MatchingToBMatching}
Let $\inst$ be an instance of {\mhbm}, and $M'$ be a matching of $\repr \inst$ such that $\exposed(M') \subseteq L_{m,B}$. 
Then there exists an {\hbm} $M$ of $\inst$ such that $M'=\repr(M)$, and it satisfies 
\begin{enumerate}[i)]
    \item \label{itm:numOfTopMatches} $\abs{V(M') \cap L_{m,B}} = 2 \abs{M}$,
    \item \label{itm:numOfInternalMatches} the number of edges of $M'$ between $L_{i,B}$ and $L_{i,T}$ is equal to $s_{L_i,M}$ for every $i \in [m-1]$.
    \end{enumerate}
Moreover, $M$ can be found in time linear in the size of $\repr \inst$.

\end{lemma}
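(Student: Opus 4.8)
The plan is to read an \hbm{} $M$ directly off the matching $M'$ and then verify, by a single bottom-up pass over $T_\cl$, that it satisfies all the bounds. Concretely, for every edge $e=ij$ of $G$ I would set $x_e$ to be the number of vertices of $E_{i,j}$ that $M'$ matches along an $E_{UP}$ edge to $L_{i,B}$; equivalently, $x_e$ equals $c_e$ minus the number of $E_E$-edges of $M'$ lying inside $E_{i,j}\times E_{j,i}$. The first thing to check is that this is well defined and symmetric in $i$ and $j$: since $\exposed(M') \subseteq L_{m,B}$, all $2c_e$ vertices of the gadget $E_{i,j}\cup E_{j,i}$ are matched, and each is matched either ``up'' (via $E_{UP}$, to $L_{i,B}$ or to $L_{j,B}$) or ``across'' (via the perfect matching $E_E$). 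Hence the number of $E_{i,j}$-vertices going up equals $c_e$ minus the number of used $E_E$-edges, which is also the number of $E_{j,i}$-vertices going up; this common value is $x_e$, and clearly $0 \le x_e \le c_e$, so $M$ is a legal multiset of edges.

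The core of the argument is establishing the degree bounds $d_M(L_k) \le b_k$, and this is where I expect the real work to lie. For each non-root $k \in [m-1]$ let $p_k$, $q_k$, $r_k$ denote, respectively, the number of vertices of $L_{k,B}$ matched by $M'$ from below, the number of $E_{IN}$-edges of $M'$ inside $L_{k,B}\cup L_{k,T}$, and the number of $L_{k,T}$-vertices matched up to the parent. Because every vertex of $L_{k,B}$ and of $L_{k,T}$ is matched (again by $\exposed(M') \subseteq L_{m,B}$), and the only edges incident to these vertices are the $E_{IN}$ and $E_{UP}$ links of the construction, I get the conservation identities $p_k + q_k = b_k$ and $q_k + r_k = b_k$, hence $p_k = r_k$. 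I would then prove $p_k = d_M(L_k)$ by induction along $T_\cl$: for a singleton $L_k=\set{k}$ this is exactly the definition of the $x_e$'s, since the vertices of $L_{k,B}$ matched from below are precisely those carrying the $\sum_{j} x_{kj} = d_M(k)$ up-edges from the edge gadgets; for an internal $L_k$ the vertices matched from below are exactly those receiving up-edges from its children's top sets, so $p_k = \sum_{k' \in \children(L_k)} r_{k'} = \sum_{k'} p_{k'} = \sum_{k'} d_M(L_{k'}) = d_M(L_k)$. Combined with $q_k \ge 0$ and $p_k + q_k = b_k$, this yields $d_M(L_k) = p_k \le b_k$, so $M$ is a feasible \hbm{}.

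Both displayed properties now fall out of the same bookkeeping. Property \itemref{numOfInternalMatches} is immediate, as the number of $E_{IN}$-edges of $M'$ between $L_{i,B}$ and $L_{i,T}$ is $q_i = b_i - p_i = b_i - d_M(L_i) = s_{L_i,M}$. For property \itemref{numOfTopMatches} I apply the same conservation count at the root: $L_{m,B}$ has no top partner and no parent, so its matched vertices are exactly those receiving up-edges from the children of $L_m$, giving $\abs{V(M') \cap L_{m,B}} = \sum_{k' \in \children(L_m)} r_{k'} = d_M(L_m) = 2\abs{M}$ by the handshake identity. Finally, to get $M' = \repr(M)$ I would observe that within each gadget $M'$ and $\repr(M)$ use the same number of edges of each type, so they differ only in the identities of the representative vertices; choosing the ``predefined orders'' in the definition of $\repr(M)$ to agree with the vertices selected by $M'$ makes the two matchings coincide. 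The whole extraction is one scan of the edge gadgets to compute the $x_e$'s, followed by one postorder traversal of $T_\cl$ to compute the $p_k$ and $q_k$, touching each vertex and edge of $\repr\inst$ a constant number of times, which gives the claimed linear running time.

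The step I expect to be the main obstacle is the feasibility verification, and in particular deriving the identities $p_k + q_k = b_k$ and $p_k = d_M(L_k)$ \emph{without} assuming that $M'$ has the layered form of a representing matching: all that is available is that the vertices outside $L_{m,B}$ are saturated, together with the rigid adjacency pattern of the gadgets (the two perfect matchings $E_{IN}, E_E$ and the complete bipartite $E_{UP}$ links), and the induction must extract flow conservation from these alone. A secondary nuisance is the exact-equality claim $M' = \repr(M)$, which forces one to reconcile the arbitrary choices made by $M'$ with the ordering conventions built into the definition of $\repr(M)$.
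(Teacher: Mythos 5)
Your proposal is correct and follows essentially the same route as the paper: define $x_e$ from the number of $E_{i,j}$-vertices matched across into $L_{i,B}$ rather than along $E_E$, use the fact that every vertex outside $L_{m,B}$ is saturated to get the gadget-level conservation identities, and propagate $d_M(L_k)\le b_k$ by induction up the laminar tree. Your explicit $p_k,q_k,r_k$ bookkeeping is just a more detailed rendering of the paper's case analysis on how each $z_B\in L_{k,B}$ and $z_T\in L_{k,T}$ is matched, and you additionally address the $M'=\repr(M)$ and linear-time claims, which the paper's proof leaves implicit.
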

\begin{proof}
Consider an edge $e=ij$ of $G$, and the sets $E_{i,j}, E_{j,i} \subseteq \ce$
of vertices of $G' = \repr \inst$. 
For every vertex $w$ of $E_{i,j}$ let $w'$ be its corresponding vertex in $E_{j,i}$.
Then exactly one of the following holds:
a) $w w' \in M'$,
b) $w$ is matched to a vertex of $L_{i,B}$ and $w'$ is is matched to a vertex of $L_{j,B}$ by $M'$.
Let $M$ be the multiset of edges of $G$ such that, for every edge $e=ij$ of $G$,
$x_{e,M}$ equals to the number of vertices $w$ of $E_{i,j}$ for which condition b) holds.
We claim that $M$ is an {\hbm} of $\inst$.

By definition of $M$, the multiplicity of an edge $e$ in $M$ is at most $c_e$.
For every $i \in [n]$, the number of vertices of $L_{i,B}$ matched to a vertex of $\ce$ is $d_M(i)$.
Let $z_B \in L_{i,B}$ and $z_T \in L_{i,T}$ for $i \in [m-1]$ such that $z_B z_T$ is an edge of $G'$. 
Then exactly one of the following holds:
a) $z_B z_T \in M'$,
b) $z_B$ is matched to some vertex of $L_{j,T}$ and $z_T$ is matched to some vertex of $L_{j',B}$ such that $L_j$ is a child of $L_i$ and $L_{j'}$ is the parent of $L_i$.
From the above facts, it follows by induction on the structure of $\cl$ that the number of vertices of $L_{i,B}$ matched to a vertex of $L_{j,T}$ for some $j$ is $d_M(L_i)$.
Therefore, for every $L_i \in \cl$ we have $d_M(L_i) \leq b_i$.
We conclude that $M$ is an {\hbm} of $\inst$.

It follows that the number of vertices of $L_{m,B}$ matched to a vertex of $L_{j,T}$ for some $j$ (i.e., the number of vertices of $L_{m.B}$ matched by $M'$) is $d_M(L_m)=2 \abs{M}$.
\end{proof}

\newcommand{\alg}{\textsc{FindRepresentingMatching}}

\begin{theorem}
$\alg$ is a pseudo-polynomial algorithm for {\mhbm}.
\end{theorem}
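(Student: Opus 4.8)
The plan is to show that \alg{} realizes the reduction prepared by the two preceding lemmas: on input $\inst$ it constructs the representing graph $G' = \repr\inst$, computes a maximum cardinality matching $M'$ of $G'$ with any polynomial-time general matching algorithm (e.g.\ \cite{MV80}), and then applies the linear-time construction of Lemma \ref{lem:MatchingToBMatching} to turn $M'$ into an \hbm{} $M$ of $\inst$, which it returns. First I would record the size-preserving correspondence pinned down by the two lemmas: by Lemma \ref{lem:BMatchingToMatching} every \hbm{} $M$ yields a matching $\repr(M)$ of $G'$ with $\abs{V(\repr(M)) \cap L_{m,B}} = 2\abs{M}$ and $\exposed(\repr(M)) \subseteq L_{m,B}$, while by Lemma \ref{lem:MatchingToBMatching} every matching $M'$ of $G'$ with $\exposed(M') \subseteq L_{m,B}$ yields an \hbm{} $M$ with $\abs{V(M') \cap L_{m,B}} = 2\abs{M}$.

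To argue correctness I would introduce the perfect matching $N = E_{IN} \cup E_E$ of the induced subgraph $G' \setminus L_{m,B}$: the edges of $E_{IN}$ perfectly match $L_{k,B}$ to $L_{k,T}$ for every $k \in [m-1]$ and the edges of $E_E$ perfectly match $E_{i,j}$ to $E_{j,i}$, so every vertex outside $L_{m,B}$ is covered. Consequently, for any matching $M'$ with $\exposed(M') \subseteq L_{m,B}$ one has $2\abs{M'} = \abs{V(M')} = 2\abs{N} + \abs{V(M') \cap L_{m,B}}$, so $\abs{V(M') \cap L_{m,B}} = 2(\abs{M'} - \abs{N})$ and the corresponding \hbm{} satisfies $\abs{M} = \abs{M'} - \abs{N}$. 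Since $\abs{N}$ is a fixed constant of the instance, maximizing $\abs{M'}$ over such matchings maximizes $\abs{M}$. Combining this with the forward direction, the optimal \hbm{} value $\mathrm{OPT}$ gives a matching $\repr(M^*)$ of size $\abs{N} + \mathrm{OPT}$, so a maximum matching $M'_{\max}$ has $\abs{M'_{\max}} \geq \abs{N} + \mathrm{OPT}$; hence the returned \hbm{} has cardinality $\abs{M'_{\max}} - \abs{N} \geq \mathrm{OPT}$, while its feasibility forces the reverse inequality, yielding optimality.

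The hard part will be that a maximum cardinality matching $M'_{\max}$ of $G'$ need not \emph{a priori} satisfy the hypothesis $\exposed(M'_{\max}) \subseteq L_{m,B}$ required by Lemma \ref{lem:MatchingToBMatching}. I would close this gap with a rotation argument based on $N$: if some $u \notin L_{m,B}$ were exposed by $M'_{\max}$, then in $M'_{\max} \triangle N$ the vertex $u$ (matched by $N$ but not by $M'_{\max}$) is the endpoint of an alternating path; by maximality this path cannot end at a vertex exposed by $M'_{\max}$ (else it would be $M'_{\max}$-augmenting), so it must end at a vertex of $L_{m,B}$, the only vertices exposed by $N$, and swapping along it gives a maximum matching with one fewer exposed vertex outside $L_{m,B}$. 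Iterating makes $\exposed(\cdot) \subseteq L_{m,B}$ hold without changing the cardinality, at linear cost per rotation. Finally, for the running time I would bound $G'$: it has $\sum_{k=1}^{m-1} b_k + \sum_{k=1}^{m} b_k + 2\sum_{ij \in E(G)} c_{ij}$ vertices, which is polynomial in $n$ and in the \emph{numeric values} of the entries of $b$ and $c$ (recall $m \leq 2n-1$ by \itemref{NumberOfLaminarSets}). Hence constructing $G'$, finding $M'$ by a polynomial matching algorithm, performing the rotations, and converting back via the linear-time procedure of Lemma \ref{lem:MatchingToBMatching} all run in time polynomial in the value of the input, i.e.\ pseudo-polynomial, as claimed.
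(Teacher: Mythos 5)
Your proposal is correct, and its skeleton necessarily matches the paper's (the theorem is about the fixed algorithm \alg{}): build $G'=\repr\inst$, find a maximum matching whose exposed vertices lie in $L_{m,B}$, and convert back via Lemma~\ref{lem:MatchingToBMatching}. The one place where you genuinely diverge is how that exposure condition is guaranteed. The paper starts from $M'=\repr(\emptyset)$ and obtains $M'^*$ by repeatedly applying augmenting paths, using the fact that augmentation can only shrink the exposed set, so $\exposed(M'^*)\subseteq\exposed(\repr(\emptyset))\subseteq L_{m,B}$ holds for free; you instead take an arbitrary maximum matching and repair it by rotating along alternating paths of $M'_{\max}\mathbin{\triangle} N$, where $N=E_{IN}\cup E_E$ (which is exactly $\repr(\emptyset)$) is a perfect matching of $G'\setminus L_{m,B}$. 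Both arguments are sound; yours has the advantage of working with any off-the-shelf maximum-matching subroutine and of making the size correspondence quantitative via $\abs{M}=\abs{M'}-\abs{N}$, which the paper leaves implicit (it argues only that minimizing $\exposed(M'^*)$ maximizes $\abs{V(M'^*)\cap L_{m,B}}$ and hence $\abs{M^*}$). On the running-time side you use the standard definition of pseudo-polynomial (polynomial in the numeric values of $b$ and $c$), whereas the paper artificially assumes $b_L\leq\abs{V(G)}^c$; your phrasing is the cleaner one. No gaps.
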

\begin{proof}
By Lemma \ref{lem:BMatchingToMatching} we have $\exposed(M') \subseteq L_{m.T}$. 
We observe that for any matching obtained by applying a sequence of augmenting paths to $M'$, in particular for $M'^*$ we have $\exposed(M'^*) \subseteq \exposed(M') \subseteq L_{m,B}$. 
Furthermore, $\exposed(M'^*)$ is minimum. 
Therefore, $V(M'^*) \cap L_{m,B}$ is maximum.
By Lemma \ref{lem:MatchingToBMatching} \textit{\ref{itm:numOfTopMatches})}, $\abs{M^*}$ is maximum.

The size of $G'$ is $\bigoh \left( \sum_{e \in G} c_e + \sum_{L \in \cl} b_L  \right) = \bigoh \left( \sum_{L \in \cl} b_L  \right)$. 
Since we want to prove pseudo-polynomial running time we assume that the values $b_L$ are bounded by $\abs{V(G)}^c$ for some fixed $c > 0$.
Then, $ \sum_{L \in \cl} b_L \leq \abs{\cl} \abs{V(G)}^c \leq 2 \abs{V(G)}^{c+1}$.
Therefore the size of $G'$ is $\bigoh(\abs{V(G)}^{c+1})$.
Since every step of the algorithm can be performed in time polynomial to $\abs{V(G')}$ we conclude the result.
\end{proof}

\alglanguage{pseudocode}
\begin{algorithm}[!ht]
\caption{$\alg$}\label{alg:PseudoPolynomial}
\begin{algorithmic}[1]
\Require{An instance $\inst$ of $\mhbm$}
\Ensure{Return an {\hbm} of $\inst$ of maximum cardinality.}
\Statex
\State $G' \gets \repr \inst$.
\State $M' \gets \repr(\emptyset)$.
\State $M'^* \gets$ a maximum matching of $G'$ such that $\exposed(M'^*)\subseteq \exposed(M')$.
\State $M^* \gets $ the {\hbm} of $\inst$ corresponding to $M'^*$ by Lemma \ref{lem:MatchingToBMatching}.  
\State \Return $M^*$.
\end{algorithmic}
\end{algorithm}

\subsection{Polynomial-time algorithm}\label{sec:fullyPolynomial}
In this section we improve the pseudo-polynomial algorithm of Section \ref{sec:pseudoPolynomial} to get a polynomial-time algorithm.
We achieve this in two stages.
First, we present (in Lemma 
\ref{lem:PolyTimeAugmentingPath}) a polynomial-time algorithm to augment a given matching $M$. 
Then, we present (in Lemma 
\ref{lem:anstee}) a technique to bound the number of augmentations by a polynomial in the size of the input. 
Combining these two lemmas we get (in Theorem \ref{theo:poly}) our polynomial-time algorithm. 

\begin{definition}
Let $M$ be an {\hbm} of $\inst$.
The \emph{augmentation graph} $\aug(M)$ of $M$ is the 2-edge-colored induced subgraph of $\repr \inst$ obtained by
\begin{itemize}
\item coloring every edge of $\repr(M)$ red, and every other edge blue,
\item marking $\min \set{2, d_M(L_i)}$ blue and $\min \set{2,s_{L_i}}$ red edges between $L_{i,T}$ and $L_{i,B}$ for every $i \in [m-1]$,
\item marking $\min \set{2, x_{e,M}}$ blue and $\min \set{2,s_{e,M}}$ red edges between $E_{i,j}$ and $E_{j,i}$ for every edge $e=ij$ of $G$, and finally
\item removing all vertices and edges that are not incident to any of the marked edges, except for two vertices $x_1,x_2$ of $L_{m,B}$ unmatched by $\repr(M)$.
\end{itemize}
\end{definition}
In our example, the first step of coloring the edges of $\repr(M)$ red and blue is depicted in 
Figure \ref{fig:Repr}, where the red edges are as indicated in that figure, and all other edges are blue. 

Though the above definition uses $\repr \inst$ in order to construct $\aug(M)$, 
it is easy to see that it can be constructed in time $\bigoh(\abs{E(G)})$, without constructing $\repr(M)$ at the first place.

\begin{lemma}\label{lem:PolyTimeAugmentingPath}
Let $M$ be an {\hbm} of $\inst$. Then
\begin{enumerate}[i)]
\item The only unmatched vertices in  $\aug(M)$ are $x_1$ and $x_2$, and 
\item $M$ is not of maximum cardinality if and only if $\aug(M)$ contains an alternating (odd) path (connecting $x_1$ and $x_2$).
\end{enumerate}
\end{lemma}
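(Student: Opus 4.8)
Looking at this lemma, I need to prove two statements about the augmentation graph $\aug(M)$. Let me think through the structure.

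Statement (i) is about which vertices are unmatched. Statement (ii) connects the existence of an augmenting path to $M$ not being maximum, relating this back to the pseudo-polynomial reduction via Berge's Lemma. The key insight is that $\aug(M)$ is a "compressed" version of $\repr\inst$ where we only keep up to 2 red and 2 blue parallel edges at each parallel-edge bundle. Let me sketch the proof plan.

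=== PROOF PROPOSAL ===

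\begin{proof}[Proof sketch]
The plan is to transfer the augmenting-path characterization from the full representing graph $\repr\inst$ (where it follows from Berge's Lemma) down to the much smaller graph $\aug(M)$.

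For part \textit{(i)}, I would argue directly from the marking and removal rules. Every vertex surviving into $\aug(M)$ is, by construction, incident to at least one marked edge. A marked edge is either red (in $\repr(M)$) or blue (not in $\repr(M)$). Since $\repr(M)$ is a matching of $\repr\inst$ and by Lemma~\ref{lem:BMatchingToMatching} its only exposed vertices lie in $L_{m,B}$, every vertex incident to a red marked edge is matched in $\aug(M)$, and every vertex incident only to blue marked edges is matched iff it also carries a red edge. The one delicate point is the bundle structure: within a parallel bundle (between $L_{i,T}$ and $L_{i,B}$, or between $E_{i,j}$ and $E_{j,i}$) we keep $\min\{2,\cdot\}$ edges of each color, so I must check that the endpoints retained by red markings coincide with those needed to keep blue-marked vertices matched. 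The two vertices $x_1,x_2$ of $L_{m,B}$ are explicitly added as the sole intended exposed vertices; so I would verify that every other retained vertex is saturated by a red marked edge, leaving exactly $x_1,x_2$ unmatched.

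For part \textit{(ii)}, the clean direction is to relate maximality in $\inst$ to maximality of the matching $\repr(M)$ in $\repr\inst$: by Lemma~\ref{lem:BMatchingToMatching} and Lemma~\ref{lem:MatchingToBMatching}, maximizing $\abs{M}$ is equivalent to maximizing $\abs{V(\repr(M))\cap L_{m,B}}$, hence to minimizing $\abs{\exposed(\repr(M))}$ subject to $\exposed\subseteq L_{m,B}$. By Berge's Lemma, $\repr(M)$ admits an augmenting path in $\repr\inst$ iff it is not maximum. So it remains to show that $\repr\inst$ has a $\repr(M)$-augmenting path (necessarily between two exposed vertices of $L_{m,B}$, in particular $x_1,x_2$) if and only if $\aug(M)$ does. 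The hard part is this equivalence of augmenting-path existence between the big and small graphs, and it is where I expect the real work to be.

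The crux is that capping each parallel bundle at two edges per color loses no augmenting path. An augmenting path alternates red/blue and uses each bundle at most a bounded number of times; the point of keeping two of each color is that an alternating path can traverse a bundle ``through'' it (entering on one color, leaving on the other) using at most two distinct edges, and since parallel edges in a bundle are interchangeable, any augmenting path in $\repr\inst$ can be rerouted to use only marked edges. Concretely, I would argue that a $\repr(M)$-augmenting path $P$ in $\repr\inst$ can be converted, bundle by bundle, into an alternating $x_1$--$x_2$ path in $\aug(M)$ by substituting each used edge with a marked edge of the same color and same bundle, never needing more than two per color because consecutive edges of $P$ alternate colors. The converse is immediate since $\aug(M)$ is an induced subgraph of (a color refinement of) $\repr\inst$, so any alternating path there lifts verbatim. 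Combining the two directions with the Berge-style argument above yields \textit{(ii)}.
\end{proof}
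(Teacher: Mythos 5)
There is a genuine gap at exactly the point you flag as ``the crux.'' Your justification that an augmenting path of $\repr\inst$ never needs more than two edges of each colour per bundle ``because consecutive edges of $P$ alternate colors'' does not hold: colour alternation is a local property and does not bound how many times a simple path crosses a given bundle. The bundles join large vertex sets (of size $b_k$ or $c_{i,j}$), so a $\repr(M)$-augmenting path can zigzag up and down the hierarchy and traverse, say, the perfect matching between $L_{i,T}$ and $L_{i,B}$ many times, each time through a fresh pair of vertices; such a path cannot be ``rerouted, bundle by bundle'' into $\aug(M)$, which retains only about four vertices per block. The substance of the paper's proof is precisely the argument you omit: contract every block $L_{i,X}$ and $E_{i,j}$ to a single vertex of a multigraph, replace the augmenting path by a \emph{shortest} alternating trail between the appropriate blue endpoints, observe that such a trail contains no even cycles (else it could be shortened), deduce that any two occurrences of a contracted vertex are at odd distance and hence that every contracted vertex occurs at most twice, and only then split the trail back into a path of $\aug(M)$. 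The paper also has to check the corner case $b_k=1$ or $c_e=1$, where the contracted vertex cannot be split in two; this is handled by noting such vertices occur at most once in the trail. Without this (or an equivalent) bound on the number of bundle crossings, the necessity direction of \textit{ii)} is not established.

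A secondary, smaller issue: the augmenting path in $\repr\inst$ ends at \emph{some} two exposed vertices of $L_{m,B}$, not necessarily at $x_1,x_2$; you need the interchangeability of the vertices of $L_{m,B}$ (or the contraction argument, which gives this for free) to conclude that the path in $\aug(M)$ can be taken to connect $x_1$ and $x_2$. Your treatment of part \textit{i)} and of the sufficiency direction of \textit{ii)} matches the paper and is fine.
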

\begin{proof}
$i)$ follows immediately from the construction of $\aug(M)$.
We now prove $ii)$.
To show sufficiency, suppose that $\aug(M)$ contains a path $P$ as claimed.
Since $\aug(M)$ is a subgraph of $\repr \inst$, $P$ is a path of $\repr \inst$.
Moreover, since red edges are edges of $\repr(M)$ and blue edges are non-edges of it, $P$ is a $\repr(M)$-augmenting path.
Therefore, $M$ is not of maximum cardinality.

To show necessity, suppose that $M$ is not of maximum cardinality. 
Then, $H=\repr \inst$ contains a $\repr(M)$-augmenting path $P$.
We color those edges of $H$ that are in $\repr(M)$ red, and the others blue.
The path $P$ connects unmatched vertices, hence it must connect two vertices of $L_{m,B}$ and its end edges are blue.
Let $H'$ be the multigraph obtained by contracting every set $L_{i,X} (i \in [m], X \in \set{B,T})$ and $E_{i,j} (ij \in E(G))$ of vertices to a single vertex, 
and allowing parallel edges.
Clearly, $P$ corresponds to an alternating trail $T'$ of $H'$ starting and ending with blue edges incident to $L_{m,B}$.
Note that whenever $b_k=1$ for some $k \in [m-1]$ or $c_e=1$ for some edge $e=ij$ of $G$ the corresponding vertices, 
namely  $L_{k,B}$, $L_{k,T}$, $E_{i,j}$ and $E_{j,i}$ have multiplicity of one in the multigraph.
Therefore, such vertices may appear at most once in $T'$.
Let $T''$ be the shortest trail of $H'$ having these properties, and $v$ a vertex of $H'$.
The trail $T''$ does not contain even cycles, since by eliminating an even cycle one can get a shorter alternating trail. 
Therefore, the number of edges between any two occurrences of $v$ in $T''$ must be odd.
If $v$ occurs (at least) three times in $T''$ then two of the occurrences must be at even distance from each other. 
We conclude that the number of occurrences of $v$ in $T''$ is at most two.
We can construct from $T''$ a path $P''$ of $\aug(M)$ by 
a) splitting up every two parallel edges between some $L_{i,T}$ and $L_{i,B}$ into two disjoint edges,
b) splitting up every two parallel edges (that necessarily have the same color) between some $E_{i,j}$ and $E_{j,}$ into two disjoint edges, and 
c) splitting into two non-adjacent vertices any vertex that is still traversed twice by $P''$.
See Figure \ref{fig:SplitTrail} for this operation.
Note that the only case that the resulting path is possibly not a path of $\aug(M)$ is the case that a vertex $v$ is visited twice by $T''$ but has multiplicity of one in the multigraph.
However, as already observed, such vertices appear at most once in $T''$.
\end{proof}

\begin{figure}
\begin{center}
\scalebox{1}{\commentfig{
\includegraphics[width=0.7\textwidth]{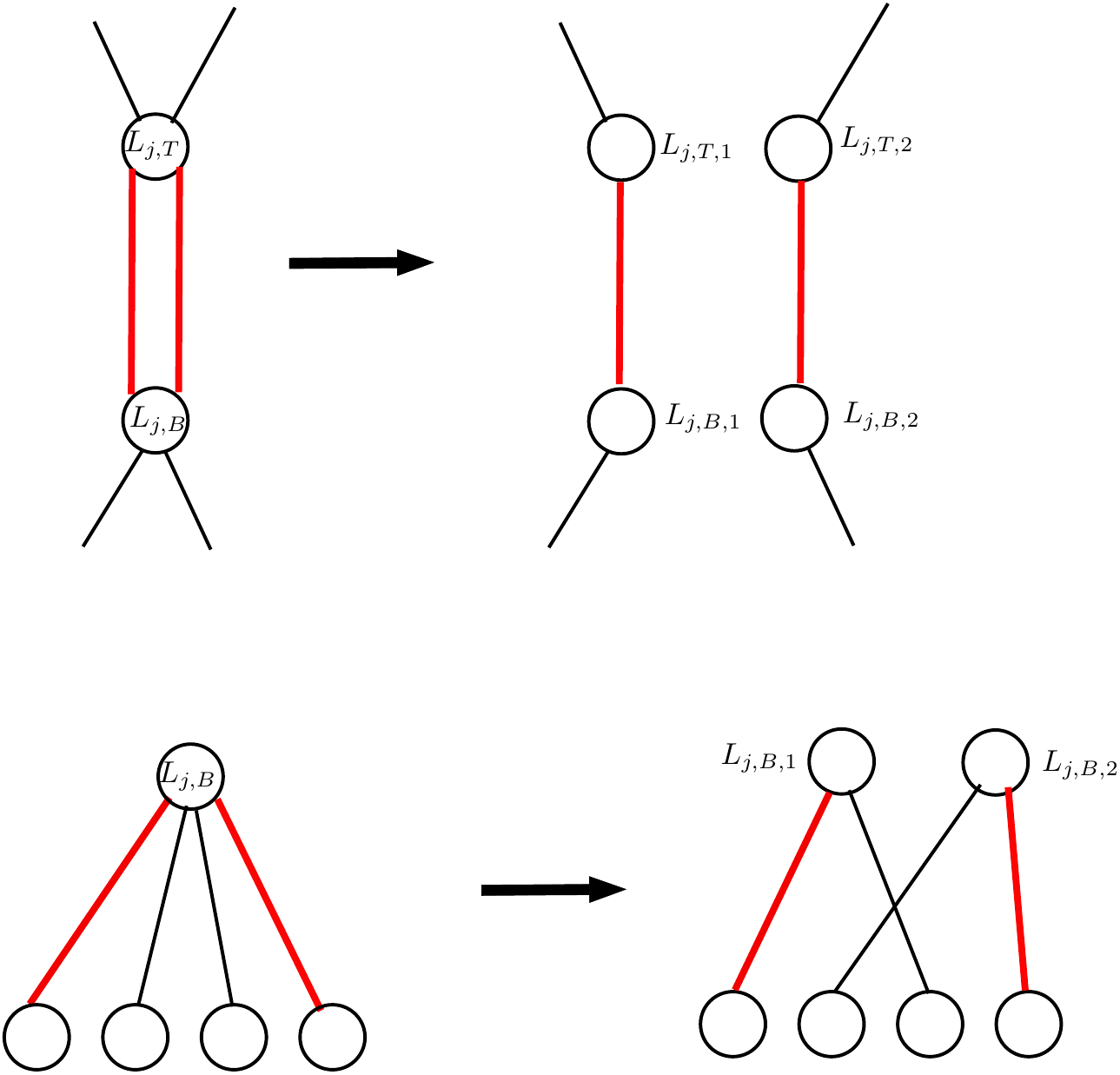}}}
\caption{The splitting of a trail $T''$ into a path $P''$.}\label{fig:SplitTrail}
\end{center}
\end{figure}

Since $\aug(M)$ can be constructed in time $\bigoh(\abs{E(G)})$ and an augmenting path of it can be found in time $\bigoh(\abs{E(G)})$,  
Lemma \ref{lem:PolyTimeAugmentingPath} implies that given an {\hbm} $M$, 
one can find in time $\bigoh(\abs{E(G)})$ an {\hbm} of cardinality $\abs{M}+1$ or decide that $M$ is of maximum cardinality. 
Therefore,

\begin{corollary}
A maximum cardinality {\hbm} $M$ of $\inst$ can be found in time $\abs{M} \cdot \bigoh(\abs{E(G)})$.
\end{corollary}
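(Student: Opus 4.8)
The plan is to promote the single augmentation step of Lemma~\ref{lem:PolyTimeAugmentingPath} to a full algorithm, in direct analogy with Berge's augmenting-path method for ordinary matchings. First I would initialize $M \gets \emptyset$; the empty multiset is trivially an {\hbm} of $\inst$, and by Lemma~\ref{lem:BMatchingToMatching} its representing matching $\repr(\emptyset)$ exposes only vertices of $L_{m,B}$. I would then iterate the following step: build $\aug(M)$, search it for an alternating path joining the two exposed vertices $x_1,x_2 \in L_{m,B}$, and, if such a path is found, augment along it to obtain an {\hbm} of cardinality $\abs{M}+1$; if no such path exists the loop halts and returns $M$.

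For correctness I would maintain the invariant that $M$ is an {\hbm} whose representing matching exposes only vertices of $L_{m,B}$, so that the halting criterion of Lemma~\ref{lem:PolyTimeAugmentingPath} keeps applying. An alternating path found inside $\aug(M)$ is, by that lemma's construction, a $\repr(M)$-augmenting path of $\repr\inst$ with both endpoints in $L_{m,B}$; flipping it removes exactly these two vertices from the exposed set and creates no new exposed vertices, so the invariant is preserved. Lemma~\ref{lem:MatchingToBMatching} then recovers from the flipped matching an {\hbm} $M$, and item~\textit{\ref{itm:numOfTopMatches})} of that lemma certifies that its cardinality has grown by exactly one. When the loop halts, Lemma~\ref{lem:PolyTimeAugmentingPath} guarantees that $M$ is of maximum cardinality.

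For the running time I would rely on the two observations stated just before the corollary: each pass constructs $\aug(M)$ and searches it for an augmenting path in time $\bigoh(\abs{E(G)})$. Here I would note that $\aug(M)$ has $\bigoh(\abs{E(G)})$ vertices and edges, since the marking step retains at most two red and two blue edges per set $L_i$ and per pair $E_{i,j}$, and $\abs{\cl} = \bigoh(\abs{V(G)})$ by item~\textit{\ref{itm:NumberOfLaminarSets})}. Because every successful pass raises the cardinality by one and we begin from the empty matching, the number of passes equals the cardinality $\abs{M}$ of the returned maximum {\hbm} (the final, path-free check being absorbed into the $\bigoh$). Multiplying the per-pass cost by the number of passes yields the claimed bound $\abs{M}\cdot\bigoh(\abs{E(G)})$.

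The one point I expect to require care is performing the augmentation \emph{in place}: to keep each pass at $\bigoh(\abs{E(G)})$ I must update the multiplicities $x_{e,M}$ directly from the path of $\aug(M)$, rather than rebuild the full representing matching $\repr(M)$ (whose size is pseudo-polynomial). The resolution is that the net effect of flipping the path on each multiplicity is read off locally from how the path traverses the $E_E$ edges and the $E_{i,j}$-to-$L_{i,B}$ edges, so only the $\bigoh(\abs{E(G)})$ objects already present in $\aug(M)$ need to be touched.
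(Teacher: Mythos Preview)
Your proposal is correct and follows exactly the approach the paper intends: the corollary is stated as an immediate consequence of the preceding sentence, namely that a single augmentation (or the decision that $M$ is maximum) can be carried out in $\bigoh(\abs{E(G)})$ time via $\aug(M)$, so iterating from the empty {\hbm} yields the bound. Your added care about maintaining the invariant $\exposed(\repr(M))\subseteq L_{m,B}$ and about updating the multiplicities $x_{e,M}$ in place rather than materializing $\repr(M)$ is a welcome elaboration of points the paper leaves implicit.
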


In the sequel we extend the technique introduced in \cite{Anstee87} to bound the number of augmentations needed to find a maximum matching by a polynomial.
This is done by finding a nearly optimal {\hbm} by means of a flow network.

\begin{lemma}\label{lem:anstee}
An {\hbm} $M$ of cardinality at least $\abs{M^*} - \bigoh(\abs{V(G)})$ can be found in time $\bigoh(\abs{V(G)} \cdot \abs{E(G)})$ where $M^*$ is a maximum cardinality {\hbm} of $\inst$.
\end{lemma}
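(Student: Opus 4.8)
The plan is to follow the paradigm of \cite{Anstee87}: rather than building a maximum cardinality \hbm{} by applying Lemma \ref{lem:PolyTimeAugmentingPath} from scratch (which may cost $\Theta(\abs{M^*})$ augmentations, a pseudo-polynomial number), I would first obtain an \hbm{} that is already within an additive $\bigoh(\abs{V(G)})$ of optimal by solving a single flow problem, leaving the remaining $\bigoh(\abs{V(G)})$ augmentations to Theorem \ref{theo:poly}. Concretely, I would turn $\repr\inst$ into a small network by contracting each block $L_{k,B}$, $L_{k,T}$ ($k \in [m]$) and $E_{i,j}$ ($ij \in E(G)$) to a single node and replacing the bundle of parallel edges between two blocks by one arc whose capacity is the number of edges it represents (namely $b_k$, $c_{i,j}$, or $b_k\cdot b_{k'}$). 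By \itemref{NumberOfLaminarSets} this contracted network has only $\bigoh(\abs{\cl}+\abs{E(G)}) = \bigoh(\abs{V(G)}+\abs{E(G)})$ nodes and arcs, even though its capacities (written in binary) may be large.

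Next I would relax the \emph{general} (non-bipartite) matching problem on $\repr\inst$ to a flow problem on this network, for instance by passing to its bipartite double cover, so that the relaxation both drops the parity requirement inherent in a matching and is solvable by a maximum-flow computation. Since any \hbm{} of $\inst$ lifts, through $\repr(\cdot)$, to a matching of $\repr\inst$ and hence to a feasible integral flow, the value $\Phi$ of a maximum integral flow is an upper bound, $\Phi \ge 2\abs{M^*}$, through Lemma \ref{lem:MatchingToBMatching}\itemref{numOfTopMatches}. By the integrality theorem for flows with integer capacities such an optimum is attained integrally, and on a network of this size it can be computed within the allotted $\bigoh(\abs{V(G)}\cdot\abs{E(G)})$ time by an appropriate max-flow routine (and if necessary by capping the number of flow-augmentations, since only an additive $\bigoh(\abs{V(G)})$ accuracy is ultimately required).

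The heart of the argument is to extract from this flow an honest matching $M_0'$ of $\repr\inst$ with $\exposed(M_0')\subseteq L_{m,B}$ (as demanded by Lemma \ref{lem:MatchingToBMatching}) whose number of matched root vertices drops below $\Phi$ by only $\bigoh(\abs{V(G)})$. I would decompose the flow into flow-paths and flow-cycles of the uncontracted $\repr\inst$: paths and even cycles translate directly into alternating structures realisable by a matching, and the sole obstruction is odd cycles, which is exactly where the relaxation gap lives. The key point I would exploit is that the edge gadgets $E_{i,j},E_{j,i}$ are joined by the clean perfect matching $E_E$ and therefore behave bipartitely, so every parity defect can be confined to a tree block $L_{k,B}$ or $L_{k,T}$; charging each discarded odd structure, and each rerouting needed to push a stray exposed vertex up the tree into $L_{m,B}$, to a distinct tree node and invoking \itemref{NumberOfLaminarSets} bounds the total loss by $\bigoh(\abs{\cl}) = \bigoh(\abs{V(G)})$. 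The resulting matching then satisfies $\abs{V(M_0')\cap L_{m,B}} \ge \Phi - \bigoh(\abs{V(G)}) \ge 2\abs{M^*} - \bigoh(\abs{V(G)})$, and Lemma \ref{lem:MatchingToBMatching} converts it to an \hbm{} $M$ with $2\abs{M}=\abs{V(M_0')\cap L_{m,B}}$, i.e. $\abs{M}\ge \abs{M^*}-\bigoh(\abs{V(G)})$.

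I expect the main obstacle to be precisely this last extraction step: proving that rounding the flow to a genuine matching loses only $\bigoh(\abs{V(G)})$ rather than a quantity scaling with $\abs{E(G)}$ or with the capacities. This hinges on localising every parity defect to the $\bigoh(\abs{V(G)})$ tree blocks (using the bipartite behaviour of the edge gadgets together with \itemref{NumberOfLaminarSets}) while simultaneously arranging the correction so that all remaining exposed vertices lie in $L_{m,B}$, so that Lemma \ref{lem:MatchingToBMatching} applies verbatim. Once this structural accounting is in place, the running-time bookkeeping — one max-flow computation followed by a linear-time flow decomposition and rerouting — is routine.
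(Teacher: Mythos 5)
Your first half coincides with the paper's: both relax the problem to a single max-flow computation on a polynomial-size network obtained by giving each set of $\cl$ an ``in'' and an ``out'' copy (your bipartite double cover of the contracted $\repr\inst$ is essentially the paper's network on $[m]\cup\set{i' \mid i\in[m]}\cup\set{s,t}$), and both observe that the optimal flow value is at least $2\abs{M^*}$. The gap is in the step you yourself flag as the main obstacle: converting the flow back into a feasible {\hbm} while losing only $\bigoh(\abs{V(G)})$. Your plan --- decompose the flow into paths and cycles of the uncontracted $\repr\inst$, discard the odd structures, and charge each one to a distinct tree block --- is not carried out, and as stated it has two concrete problems. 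First, the accounting is incomplete: the obstruction to integrality is not only odd cycles. Once the flow is made symmetric, the fractional support in $G$ can also contain an acyclic part (a forest), which must be rounded as well and by itself can cost up to $\abs{V(G)}/2$; your charging scheme says nothing about it, and the claim that every parity defect can be confined to, and charged against, a distinct tree block is asserted rather than proved. Second, an explicit matching $M_0'$ of $\repr\inst$ is a pseudo-polynomially large object (of size $\Theta\bigl(\sum_{L\in\cl} b_L\bigr)$), so materializing it and then invoking Lemma \ref{lem:MatchingToBMatching}, whose running time is linear in the size of $\repr\inst$, would break the claimed $\bigoh(\abs{V(G)}\cdot\abs{E(G)})$ bound; the rounding has to be performed on the flow values in $G$ itself.

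For comparison, the paper closes this gap without ever returning to $\repr\inst$: it averages the two antiparallel arcs, setting $\bar{f}_a=\frac{f^*(a)+f^*(a')}{2}$, to obtain a symmetric and hence half-integral fractional {\hbm} $M'$ with $\abs{M'}\geq\abs{M^*}$, and then rounds $M'$ inside $G$: even cycles of the fractional support are cancelled losslessly, the remaining odd cycles are vertex-disjoint (hence at most $\abs{V(G)}/3$ of them) and cost $1/2$ each, and the leftover forest costs at most $\abs{V(G)}/2$, for a total loss of at most $\frac{2}{3}\abs{V(G)}$. If you replace your extraction step with this symmetrize-and-round argument, the rest of your proof goes through.
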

\begin{proof}
We present an algorithm that works in three stages.
In the first stage we construct a flow network corresponding the instance,
and compute a maximum flow of it.
In the second stage we use this maximum flow to find an optimal fractional solution of the {\mhbm} instance.
Finally we round this fractional solution with a loss of $\bigoh(\abs{V(G)})$.

We start by describing the construction of the flow network $F=(N, A, \kappa, s, t)$ (see Figure \ref{fig:FlowNetwork}).
$N$ is a directed graph with vertex set $[m] \cup \set{i'| i \in [m]} \cup \set{s,t}$.
For every $k \in [m-1]$, there are two arcs $p k$ and $k' p'$
with capacity $\kappa(p k) = \kappa(k' p')= b_k$ where $L_p$ is the parent of $L_k$.
There are two arcs $s m$ and $m' t$, with capacities $\kappa(s m) = \kappa(m' t) = b_{L_m}$.
For every edge $e=ij$ of $G$, $N$ contains two arcs $i j'$ and $j i'$ with $\kappa (i j') = \kappa(j i') = c_e$.
Every {\hbm} $M$ of $\inst$ implies a feasible flow $f$ of $F$ by setting $f (i j')=f (j i')=x_{e,M}$ for every edge $e=ij$ of $G$, 
$f(p k)=f(k' p') = d_M(L_k)$ for every $k \in [m-1]$ such that $L_p = \parent(L_k)$,
and $f(s m) = f(m' t) = d_M(L_m)$.
It is easy to verify that $f$ is a feasible $s-t$ flow and its value $\abs{f}$ is $d_M(R)=2 \abs{M}$.
Therefore, for a maximum flow $f^*$ and an {\hbm} $M^*$ of maximum cardinality, we have $\abs{f^*} \geq 2 \abs{M^*}$.
The number of vertices of $N$ is $2 \abs{\cl} + 2$ which is linear in $n=\abs{V(G)}$.
The number of arcs of $N$ is dominated by $2 \abs{E(G)}$.
It is well known that, since all the capacities are integral, there is an  integral maximum flow $f^*$.
Such a flow can be found in time $\bigoh(\abs{V(G)} \cdot \abs{E(G)})$ \cite{Orlin2013}.
If $f^*$ is symmetric, i.e. $f^* (i j') = f^* (j i')$ for every edge $ij$ of $G$ then we may stop at this stage and $f^*$ induces an optimal solution of {\mhbm}. 

In the second stage we compute $\bar{f}_{a} = \frac{f^*(a)+f^*(a')}{2}$ for every pair of symmetric arcs $a$ and $a'$ of $N$. 
Clearly, $\bar{f}$ is symmetric and $\abs{\bar{f}}=\abs{f^*}$.
It remains to show that $\bar{f}$ is a feasible flow. Clearly the flow on each edge is bounded by its capacity; moreover, it is easy to verify that $\bar{f}$ satisfies flow conservation at each vertex $\neq s,t$, since $f^*$ satisfies it. 

If $\bar{f}$ is integral we can assign $x_{e,M}=\bar{f} (i j') = \bar{f} (j i')$ for every edge $e=ij$ of $G$ to get an optimal solution $M$ of {\mhbm}.
Since this is not necessarily the case, this assignment leads to a half-integral fractional {\hbm} $M'$ with $\abs{M'} \geq \abs{M^*}$.

In the last stage we round $M'$ with a small loss, to get a feasible solution $M$ with $\abs{M} \geq \abs{M'} - \bigoh(\abs{V(G)})$. 
We start with $M=M'$.
Let $H_M$ be the graph induced by the edges $e$ of $G$ such that $x_{e,M}$ is non-integral. 
As far as $M$ contains an even cycle $C$, we pick a matching of $C$ consisting of half of its edges, 
increase $x_{e,M}$ by $1/2$ for every edge $e$ of the matching and decrease it by $1/2$ for all the other edges of $C$.
This modification does not affect the degrees of the vertices and thus the degrees of the sets of $L$, 
in particular it does not affect $\abs{M}$. 
At this point all the cycles of $H_M$ are odd. 
These cycles are vertex disjoint, since otherwise there is an even cycle in $H_M$. 
Therefore, the number of these odd cycles is at most $\abs{V(G)}/3$.
Let $C$ be an odd cycle of $H_M$.
We pick a maximum matching of $C$, increase $x_{e,M}$ by $1/2$ for every edge $e$ of the matching and decrease it by $1/2$ for all the other edges of $C$.
This modification does not increase the degrees of the vertices and thus the degrees of the sets of $L$, 
however it decreases $\abs{M}$ by $1/2$.
Therefore, the total loss at this stage is at most $\abs{V(G)}/6$.
At this point $H_M$ is acyclic, i.e. a forest.
If we decrease $\abs{M}$ by $1/2$ in all the edges of the forest, $\abs{M}$ decreases by at most $\abs{V(G)}/2$.
We conclude that $\abs{M} \geq \abs{M'} - \frac{2}{3}\abs{V(G)} \geq \abs{M^*} -\frac{2}{3}\abs{V(G)}$.
\end{proof}

\begin{figure}
\begin{center}
\scalebox{1}{\commentfig{
\includegraphics[width=\textwidth]{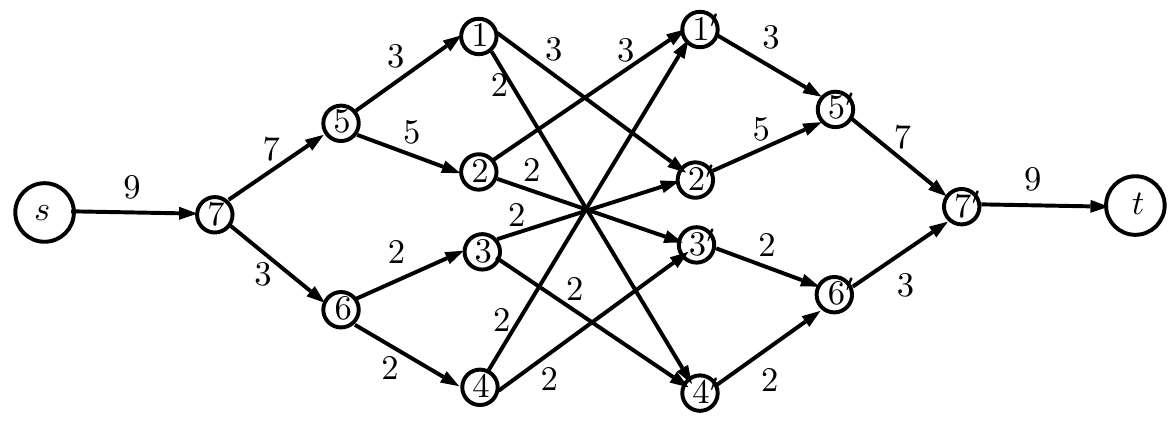}}}
\caption{The flow network instance corresponding to the instance of {\mhbm} depicted in Figure \ref{fig:Instance}.}\label{fig:FlowNetwork}
\end{center}
\end{figure}

We are now ready to prove our main theorem.
\begin{theorem}\label{theo:poly}
A maximum cardinality {\hbm} of $\inst$ can be found in time $\bigoh (\abs{E(G)} \cdot \abs{V(G)})$.
\end{theorem}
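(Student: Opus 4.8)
The plan is to combine Lemma~\ref{lem:anstee} with the single-augmentation routine of Lemma~\ref{lem:PolyTimeAugmentingPath} in the two-phase manner of \cite{Anstee87}. In the first phase I would invoke Lemma~\ref{lem:anstee} to compute, in time $\bigoh(\abs{V(G)} \cdot \abs{E(G)})$, an initial {\hbm} $M_0$ of $\inst$ satisfying $\abs{M_0} \geq \abs{M^*} - \bigoh(\abs{V(G)})$, where $M^*$ is a maximum cardinality {\hbm}. This brings us to within an additive $\bigoh(\abs{V(G)})$ of the optimum while paying only a polynomial (rather than pseudo-polynomial) price, since the underlying flow network has $\bigoh(\abs{V(G)})$ vertices and $\bigoh(\abs{E(G)})$ arcs.

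In the second phase I would repeatedly apply Lemma~\ref{lem:PolyTimeAugmentingPath}, starting from $M_0$. At each iteration I build $\aug(M)$ in time $\bigoh(\abs{E(G)})$, search it for an alternating path connecting $x_1$ and $x_2$, and either augment $M$ to an {\hbm} of cardinality $\abs{M}+1$ or halt, declaring $M$ maximum. The correctness of this halting condition is precisely the Berge-type characterization of Lemma~\ref{lem:PolyTimeAugmentingPath}(ii): $M$ fails to be of maximum cardinality if and only if such a path exists.

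To bound the running time of the second phase, observe that each successful augmentation raises the cardinality by exactly one and the process terminates at a maximum {\hbm}. Because the first phase already produced $M_0$ within $\bigoh(\abs{V(G)})$ of the optimum, the number of augmentation steps is at most $\abs{M^*} - \abs{M_0} = \bigoh(\abs{V(G)})$, each costing $\bigoh(\abs{E(G)})$; hence the second phase also runs in $\bigoh(\abs{V(G)} \cdot \abs{E(G)})$ time. Summing the two phases yields the claimed bound.

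The theorem is essentially a clean composition of the two preceding lemmas, so I expect no deep obstacle here; the one point that warrants care is verifying that the counting argument is legitimate, namely that the augmentation routine, when applied to the feasible solution $M_0$ delivered by the rounding stage of Lemma~\ref{lem:anstee}, really does increase $\abs{M}$ by exactly one at every step and never degrades feasibility, so that the number of iterations is genuinely controlled by the additive gap of the first phase rather than by $\abs{M^*}$ itself.
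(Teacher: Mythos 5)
Your proposal is correct and follows exactly the paper's own argument: use Lemma~\ref{lem:anstee} to obtain an {\hbm} within an additive $\bigoh(\abs{V(G)})$ of optimum in $\bigoh(\abs{V(G)} \cdot \abs{E(G)})$ time, then close the gap with $\bigoh(\abs{V(G)})$ augmentations via Lemma~\ref{lem:PolyTimeAugmentingPath}, each costing $\bigoh(\abs{E(G)})$. No substantive difference from the paper's proof.
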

\begin{proof}
Let $M^*$ be a maximum cardinality {\hbm} of $\inst$.
By Lemma \ref{lem:anstee}, we can find an {\hbm} $M$ of cardinality at least $\abs{M^*} - \bigoh(\abs{V(G)})$ 
in time $\bigoh(\abs{V(G)} \cdot \abs{E(G)})$.
To get an optimal {\hbm} we can augment $M$ using $\bigoh(\abs{V(G)})$ augmentations, 
each of which can be done in time $\bigoh(\abs{E(G)})$, by Lemma \ref{lem:PolyTimeAugmentingPath}.
\end{proof}

\section{Conclusion and Open Problems}\label{sec:summary}
In this paper we studied the  {\hbm} problem, which is an extension of the $b$-matching problem, in which the vertices are organized in a hierarchical manner (independently of the structure of the graph). 
At each level the vertices are partitioned into disjoint subsets, with a given bound on the sum of degrees of every subset. 
The optimization problem is to find a maximum set of edges, that will obey all degree bounds.
This problem is applicable to many social structures, where the organization is of hierarchical nature. 

We have presented a polynomial-time algorithm for this new problem, in a few stages. 
We first reduced it to an ordinary matching in an associated larger graph, and this resulted in  a pseudo polynomial algorithm for the problem. 
We then improved it to a polynomial-time algorithm. 
This was achieved by combining results of two stages: in the first stage we presented a polynomial-time algorithm to augment a given matching, 
and in the second one we presented a technique to bound the number of augmentations by a polynomial in the size of the input.

A few open problems are immediately related to our result. The first one is the weighted case of the problem, in which every edge has an associated weight
and the goal is to find an {\hbm} of maximum weight. 
To find an efficient algorithm for this problem, apparently requires the extension of the linear programming techniques used in \cite{Pul73}.  
Another problem is to consider the hierarchical case where the bounds on the sets are interpreted as a bound on the number of edges that connect the vertices of the set to the rest of the graph. 

\newpage

\bibliography{GraphTheory,Mordo,Matching,References}

\end{document}